\newcommand{\ketbra}[2]{\mbox{$|#1\rangle\langle #2|$}}
\def\ketbra#1#2{{\vert#1\rangle\!\langle#2\vert}}
\def\braket#1#2{{\langle#1\vert#2\rangle}}
\def\2LH{{\sc $2$-local Hamiltonian}}
\def\5LH{{\sc $5$-local Hamiltonian}}
\newtheorem{theorem}{Theorem}
\newtheorem{remark}{Remark}
\newtheorem{definition}{Definition}
\begin{document}

\title{Depth scaling of unstructured search
via quantum approximate optimization}

\author{Ernesto Campos}
\email[e-mail: ]{ernesto.campos@skoltech.ru}
\affiliation{Skolkovo Institute of Science and
Technology, Moscow, Russian Federation}

\author{Daniil Rabinovich}\affiliation{Skolkovo Institute of Science and
Technology, Moscow, Russian Federation}
\affiliation{Moscow Institute of Physics and Technology, Dolgoprudny, Russian Federation}

\author{Alexey Uvarov}
\email{Former affiliation}
\affiliation{Skolkovo Institute of Science and
Technology, Moscow, Russian Federation}


\begin{abstract}
Variational quantum algorithms have become the de facto model for current quantum computations. A prominent example of such algorithms---the quantum approximate optimization algorithm (QAOA)---was originally designed for combinatorial optimization tasks, but has been shown to be successful for a variety of other problems. However, for most of these problems the optimal circuit depth remains unknown. One such problem is unstructured search which consists on finding a particular bit string, or equivalently, preparing a state of high overlap with a target state. To bound the optimal QAOA depth for such problem we build 
on its known solution in a continuous time quantum walk (CTQW). We trotterize a CTQW to recover a QAOA sequence, and employ recent advances on the theory of Trotter formulas to bound the query complexity (circuit depth) needed to prepare a state that approaches perfect overlap with the target state. The obtained complexity exceeds the Grover's algorithm complexity $O\left(N^\frac{1}{2}\right)$, but remains smaller than $O \left(N^{\frac{1}{2}+c}\right)$ for any $c>0$, which shows quantum advantage of QAOA over classical solutions. We verify our analytical predictions by numerical simulations of up to 68 qubits.
\end{abstract}

 \maketitle

\section {Introduction}
Variational quantum algorithms (VQAs) have become the de facto model of quantum computation of the noisy intermediate scale quantum (NISQ) computers era. These algorithms make use of a parameterized quantum circuit (a.k.a ansatz) whose parameters are iteratively tuned by a classical co-processor to minimize a cost function---a procedure that has been shown to alleviate some of the limitations of NISQ computers \cite{sharma2020noise, gentini2020noise, cincio2021machine}. 
A prominent example of a variational algorithm, called quantum approximate optimization algorithm (QAOA) was originally designed to approximate solutions to combinatorial optimization problems \cite{farhi2014quantum}.
It was later shown to be effective for a variety of other tasks, and is currently one of the most studied quantum algorithms. 
Milestones of QAOA research include experimental realizations using 23 qubits \cite{harrigan2021quantum}, several results that aid and improve on the original implementation of the algorithm \cite{Zhou2020,wang2020x,Brady2021,akshay2021parameter}, and universality results \cite{lloyd2018quantum,morales2020universality}, implying the importance of QAOA beyond the NISQ era.
Among the promising uses of QAOA is its application to the problem of unstructured search, where the objective is to prepare a particular target bit string. Notable results related to unstructured search via QAOA include: the discovery of parameter concentrations \cite{akshay2021parameter}, optimal depth scaling when using a modified mixer \cite{morales2018variational}, and near optimal depth scaling when preparing a state with an overlap of $\frac{1}{2}$ with the target state \cite{jiang2017near}. { Nevertheless, the depth scaling for approaching perfect overlap, and how it compares to the complexity of Grover's algorithm \cite{grover1996fast} were unknown.}

The problem of unstructured search has been studied in the context of many other models of quantum computing including adiabatic quantum computing~\cite{farhi2000quantum} and continuous time quantum walks (CTQWs)~\cite{childs2004spatial}.
CTQW is a type of quantum algorithm used to simulate the dynamics of a quantum system given a graph that describes the transitions between states. In this context, ``continuous" refers to the fact that the transitions from one state to another occur continuously over time rather than at discrete steps as in a discrete time quantum walk. Originally proposed by Farhi and Gutmann \cite{farhi1998quantum}, it relates to a classical continuous time random walk by the analogy between a classical transfer matrix and a quantum  Hamiltonian. 
CTQWs are known to provide an exponential speedup in certain problems~\cite{childs2003exponential}, recover Grover's search~\cite{childs2004spatial}, and be a computationally universal model~\cite{childs2009universal}.

Continuous time evolution in CTQWs can be approximated by discretizing them into a quantum circuit with the use of product formulas, like the Suzuki-Trotter formulas \cite{berry2007efficient,suzuki1976generalized}. Notably, CTQW solving the unstructured search problem recovers a QAOA sequence upon this discretization. 

In this paper we make use of recent developments in the analysis of product formula errors \cite{childs2021theory} to discretize CTQW for the problem of unstructured search, which recovers a QAOA sequence. As a result, we identify an upper bound on the QAOA circuit depth (query complexity) for unstructured search, sufficient to approach perfect overlap. Our complexity bound exceeds the well known Grover complexity $O(N^\frac{1}{2})$ but scales slower than $O \left(N^{\frac{1}{2}+c}\right)$ for any $c>0$. Our analysis is supported by numerical evidence up to 68 qubits. Compared to the results presented by Jiang et al.~\cite{jiang2017near}, our method offers an improvement to the overlap in exchange for a slightly higher complexity.
This result sheds light onto the power of QAOA and makes a step towards closing the gap between QAOA and Grover’s algorithm in terms of scaling and overlap.

\section{Preliminaries}

\subsection{Unstructured search via CTQW}

In the most general sense, a continuous time quantum walk is described by a Hamiltonian $H$ which induces a unitary evolution, defined by the operator $e^{-iHt}$. Here Hamiltonian $H$ is analogous to the transfer matrix of classical continuous time random walks.

In this setting, solving unstructured search in a hypercube graph consists of the evolution under the operator
\begin{equation}
    U(\alpha,t)=e^{-i(\alpha H_x + H_w)t},
    \label{CTQW_ansatz}    
\end{equation}
where 
$H_x=\sum_{j=1}^n X_{j}$ is the adjacency matrix of the $n$-dimensional hypercube graph, with $X_j$ being a Pauli $X$ matrix applied to the $j$-th qubit;  $H_w=\ketbra{w}{w}$ is the projector on the target state $\ket{w} \in \{\ket{0},\ket{1}\}^{\otimes n}$, 
$\alpha$ is a tunable parameter, and $t$ is the evolution time.
The objective is then to maximize the overlap
\begin{equation}
\label{ctqw_overlap}
    \max_{\alpha, t}|\bra{w}  U(\alpha,t) \ket{+}^{\otimes n}|^2.
\end{equation}
We can see that \eqref{ctqw_overlap} is target independent by substituting $\ket{w}=U_x \ket{0}^{\otimes n}$, where $U_x \in \{X, \mathbb{1}\}^{\otimes n}$ \cite{campos2021training}.
Therefore, without loss of generality we set  $\ket{w}=\ket{0}^{\otimes n}$ and $H_w=H_0=(\ketbra{0}{0})^{\otimes n}$.
The optimal value of $\alpha$ was calculated by Farhi~et~al.~\cite{farhi2000quantum} to be 
\begin{equation}
    \alpha^*=\frac{1}{2^n}\sum_{k=1}^{n} \frac{C_k^n}{k}=\frac{1}{n}+O\left(\frac{1}{n^2}\right),
    \label{alpha}
\end{equation}
where $C_k^n$ are binomial coefficients (details in Appendix \ref{ap:alpha_delta}).
The evolution $U(\alpha^*,t)\ket{+}^{\otimes n}$ largely occurs in the two-dimensional subspace spanned by the low-energy eigenstates of $\alpha^* H_x + H_0$ (see Appendix \ref{ap:eigenstates}). These two eigenstates are approximately given by
\begin{align}
\ket{\psi_+}= \frac{1}{\sqrt{2}}(\ket{+}^{\otimes n}+\ket{0}^{\otimes n})+O\left(\frac{1}{n}\right),\label{ctqw_eigenstates}\\
\ket{\psi_-}=\frac{1}{\sqrt{2}}(\ket{+}^{\otimes n}-\ket{0}^{\otimes n})+O\left(\frac{1}{n}\right),\notag
\end{align}
with an energy gap $\Delta = \frac{2}{\sqrt{2^n}}\left(1+O\left(\frac{1}{n}\right)\right)$ \cite{childs2002quantum} (details in Appendix \ref{ap:eigenstates}).
Thus, up to a global phase we can approximate the evolution as
\begin{equation}
\label{eq:approx_u_plus}
    U(\alpha^*,t)\ket{+}^{\otimes n}=\frac{1}{\sqrt{2}}\left(\ket{\psi_+}+e^{-i\Delta t}\ket{\psi_-}\right)+O\left(\frac{1}{n}\right),
\end{equation}
allowing one to establish that for $t^*=\frac{\pi}{2}\sqrt{2^n}$ the overlap becomes
\begin{equation}
    |\bra{0}^{\otimes n}U(\alpha^*,t^*)\ket{+}^{\otimes n}|^2 = 1 + O\left(\frac{1}{n}\right).\label{eq:CTQW_max_ovl}
\end{equation}

\subsection{Unstructured search via QAOA}
An alternative way of solving the unstructured search problem is by employing QAOA. With $\ket{w}$ representing a target bit string in the computational basis, the task is to variationaly prepare a candidate state of high overlap with $\ket{w}$. In QAOA, an $n$ qubit candidate state $\ket{\psi_p (\bm\gamma, \bm\beta)}$---prepared by a circuit of depth $p$---is parametrized as:
\begin{equation}
    \ket{\psi_p(\bm\gamma,\bm\beta)} =  \prod\limits_{k=1}^p e^{-i \beta_k H_{x}} e^{-i \gamma_k H_w}\ket{+}^{\otimes{n}},
    \label{ansatz}
\end{equation} 
with real parameters $\gamma_k\in[0,2\pi)$, $\beta_k\in[0,\pi)$. 

The optimization task is to maximize the overlap between the candidate state $\ket{\psi_p(\bm\gamma,\bm\beta)}$ and the target state $\ket{w}$ given by
\begin{equation}
0\leq \max_{\bm \gamma, \bm \beta}|\braket{w}{\psi_p(\bm \gamma, \bm \beta)}|^2\leq 1
\label{overlap}
\end{equation}
Similar to CTQW, without loss of generality, one can set $\ket{w}=\ket{0}^{\otimes n}$.

\subsection{Product formulas}
Despite a similar formulation, the two discussed approaches for the unstructured search problem are essentially different. Indeed, QAOA presents a discrete evolution with the Hamiltonians $H_w$ and $H_x$ exponentiated individually, as in \eqref{ansatz}, while CTQW represents a continues evolution with both Hamiltonians appearing in the same exponent, as in \eqref{CTQW_ansatz}. Nevertheless, the so called product formulas can relate the two approaches, by providing an approximation to an operator of the form $U(t)=e^{t\sum_{\mu =1}^M H_\mu}$ with a product of exponentials of individual $H_\mu$ \cite{berry2007efficient}. The precision of that approximation depends on the so-called order of the formula, which controls number of individual $H_\mu$ exponentiations in the sequence. 
High order Trotter-Suzuki formulas $S_q(t)$ \cite{suzuki1976generalized} can be generated recursively as 
\begin{align}
    &S_2(t)=e^{H_M t/2}\cdots e^{H_1 t/2}e^{H_1 t/2}\cdots e^{H_M t/2},\\
    &S_{2k}(t)=S_{2k-2}^2(u_k t)S_{2k-2}((1-4u_k) t)S_{2k-2}^2(u_k t),\label{suzuki_2}
\end{align}
where $u_k=1/(4-4^{1/(2k-1)})$. In general, a $q$ order product formula can be written in the form 
\begin{equation}
    S_{q}(t)=\prod_{v=1}^{\Upsilon_q}\prod_{\mu=1}^M e^{ta(v,\mu)H_{\pi(v,\mu)}},
\end{equation}
where $\Upsilon_q$ is the number of stages of the formula.
The formulas of $q$th order approximate the original operator $U(t)$ with a sequence of individual $H_\mu$ exponentials with an error of $O(t^{q+1})$.

In order to approximate an evolution $U(t)$ with large $t$ one should partition an evolution into $r$ smaller Trotter steps $S_q(t/r)$. As calculated by Childs et al.~in  \cite{childs2021theory}, for unitary $U(t)$ this results in an approximation error 
\begin{equation}
    \|U(t)-S^r_q(t/r)\|_2\leq \epsilon = \frac{2(\Upsilon_q)^{q+1}\delta (t^*)^{q+1}}{r^q (q+1)},
    \label{error}
\end{equation}
where, 
\begin{equation}
    \delta=\sum_{\mu_1,\mu_2,\cdots\mu_{q+1}=1}^M \|[H_{\mu_{q+1}},\cdots [H_{\mu_2},H_{\mu_{1}}]\cdots]\|_2.
    \label{delta_nest}
\end{equation}

\section{Depth scaling of QAOA from a trotterized CTQW}

\begin{theorem}[Complexity of search via QAOA]\label{complexity}
{ A QAOA circuit of depth 
\begin{equation}
    p=O\left(2^{\frac{n}{2}+\sqrt{n}\sqrt{2\cdot\log_2 5}}\right)
    \label{theorem}
\end{equation}
can prepare a state that satisfies $|\braket{\omega}{\psi_p(\bm\gamma,\bm\beta)}|^2 = 1 + O(1/n)$.}

\end{theorem}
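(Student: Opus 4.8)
The plan is to discretize the continuous-time evolution that already solves the problem and charge the resulting QAOA depth to the Trotter error bound \eqref{error}. I would start from the continuous-time result \eqref{eq:CTQW_max_ovl}: evolving $\ket{+}^{\otimes n}$ under $U(\alpha^*,t^*)=e^{-i(\alpha^* H_x+H_0)t^*}$ with $t^*=\frac{\pi}{2}\sqrt{2^n}$ produces overlap $1+O(1/n)$ with $\ket{0}^{\otimes n}$. Applying the $q$-th order Suzuki--Trotter formula $r$ times, $S_q^r(t^*/r)$, splits this evolution into a product of individual exponentials of $\alpha^* H_x$ and of $H_0$; after merging adjacent exponentials of the same generator and folding each angle modulo $2\pi$ into the admissible ranges, the sequence is precisely an ansatz of the form \eqref{ansatz} of depth $p=O(r\,\Upsilon_q)$. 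It therefore suffices to choose $q$ and $r$ so that the operator-norm error is small enough to preserve the overlap, and then to minimize $p$.

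The second step is to make \eqref{error} explicit for the two generators $H_1=\alpha^* H_x$ and $H_2=H_0$. Since $\|H_0\|_2=1$ and, by \eqref{alpha}, $\|\alpha^* H_x\|_2=\alpha^* n=1+O(1/n)=O(1)$, every factor entering the nested commutators in \eqref{delta_nest} has norm $O(1)$; bounding each commutator of $q+1$ operators by $2^{q}$ times the product of the norms and summing the $2^{q+1}$ terms gives $\delta=O(c^{q})$ for some constant $c$ independent of $n$. A standard estimate then converts the operator-norm error into an overlap error: the amplitudes produced by $S_q^r$ and by $U$ differ by at most $\epsilon$ in modulus, so their squared overlaps with $\ket{0}^{\otimes n}$ differ by at most $2\epsilon$, and demanding $\epsilon=O(1/n)$ preserves the overlap $1+O(1/n)$.

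The final step substitutes $t^*=\Theta(2^{n/2})$ and, from \eqref{suzuki_2}, $\Upsilon_q=2\cdot 5^{q/2-1}=\Theta(5^{q/2})$ into \eqref{error} and solves for $r$. With $\epsilon=O(1/n)$ and $\delta=O(c^{q})$, whose $1/q$-th powers contribute only subexponential factors, one finds $r=O\!\big(5^{(q+1)/2}\,2^{\frac{n}{2}(1+1/q)}\big)$, hence $p=O(r\,\Upsilon_q)=O\!\big(5^{q}\,2^{\frac{n}{2}+\frac{n}{2q}}\big)$ up to lower-order corrections. Taking $\log_2$ and minimizing $q\log_2 5+\frac{n}{2q}$ over the order $q$ gives the optimum $q^\ast=\sqrt{n/(2\log_2 5)}$ and minimal value $\sqrt{2n\log_2 5}=\sqrt{n}\,\sqrt{2\log_2 5}$, which reproduces \eqref{theorem}.

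\emph{The main obstacle} is the competition between order and stage count: raising $q$ shrinks the number of steps $r$ through the factor $2^{n/(2q)}$ but inflates the per-step cost through $\Upsilon_q=\Theta(5^{q/2})$, and the entire content of the theorem is locating the balance point $q^\ast=\Theta(\sqrt{n})$ that yields the super-$\sqrt{N}$ yet sub-$N^{1/2+c}$ correction $2^{\sqrt{n}\sqrt{2\log_2 5}}$. Pinning down the constant $\sqrt{2\log_2 5}$ exactly requires being careful that the factor $\Upsilon_q^{q+1}$ in \eqref{error} contributes $q\log_2 5$ to $\log_2 p$ once combined with the explicit $\Upsilon_q$ in the depth count; mishandling it would alter the constant under the square root. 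A secondary point to verify is that the negative Suzuki coefficients $(1-4u_k)$ still produce legitimate QAOA angles after reduction modulo $2\pi$.
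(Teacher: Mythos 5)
Your proposal is correct and follows the same overall strategy as the paper: trotterize $U(\alpha^*,t^*)$ with a $q$-th order Suzuki formula, read the resulting sequence as a QAOA circuit of depth $p=r\Upsilon_q$, feed $t^*=\Theta(2^{n/2})$, $\Upsilon_q=\Theta(5^{q/2})$ and the commutator quantity $\delta$ into the Childs et al.\ bound \eqref{error}, and then balance the per-step cost $5^{q}$ against the step count $2^{n/(2q)}$ to obtain $q^*=\sqrt{n/(2\log_2 5)}$ and the exponent $\frac{n}{2}+\sqrt{n}\sqrt{2\log_2 5}$; your bookkeeping of how $\Upsilon_q^{q+1}$ combines with the explicit $\Upsilon_q$ in the depth, and your requirement $\epsilon=O(1/n)$, match the paper's \eqref{depth_qaoa_qtqw}--\eqref{depth_qaoa_ctqw_subs} and the closing argument of its proof. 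The one place where you genuinely deviate is the estimate of $\delta$: the paper restricts to the $(n+1)$-dimensional symmetric subspace, bounds max norms of nested commutators entry-wise, passes to the spectral norm through the Frobenius norm, and counts placements of $H_2$ to get $\delta\le 2(n+1)\left(2\alpha^*(n+1)+1\right)^q$; you instead bound everything directly in spectral norm, using $\|H_0\|_2=1$, $\|\alpha^* H_x\|_2=\alpha^* n=1+O(1/n)$ and $\|[A,B]\|_2\le 2\|A\|_2\|B\|_2$, giving $\delta=O(c^{q})$ with $c$ independent of $n$. This is more elementary, avoids the subspace machinery entirely, and is just as strong for the theorem, since the polynomial-in-$n$ prefactor in the paper's bound only enters $p$ through a $1/q$-th power with $q=\Theta(\sqrt n)$ and is therefore asymptotically irrelevant (the paper still needs the symmetric subspace elsewhere, e.g.\ for the eigenstate analysis and the numerics, but not for this step). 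Your explicit conversion from the operator-norm error $\epsilon$ to a $2\epsilon$ change in the squared overlap is also a slightly more careful version of the paper's closing remark, and your side concern about negative Suzuki coefficients is harmless, exactly as you suggest, since angles are taken modulo $2\pi$.
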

\begin{proof}
The evolution operator $U(\alpha^*,t^*)$ can be approximated by a $q$th order product formula 
\begin{equation}
    S_{q}^r(t^*/r)=\prod_{v=1}^{r\Upsilon_q} e^{t^*b(v)H_2} e^{t^*a(v)H_1},\label{eq:formula_form}
\end{equation}
where $a(v),b(v) \in \mathbb{R}$, $H_1=-iH_0$, $H_2=-iH_x \alpha^*$, and $\Upsilon_q = 5^{q/2-1}$ (details in Appendix \ref{appendix:CTQW_to_QAOA}). Equation \eqref{eq:formula_form} gives a QAOA sequence of depth $p=r\Upsilon_q$. The objective is to calculate the depth that results in an approximation with an error $\|U(\alpha^*,t^*)-S^r_q(t/r)\|_2 \leq \epsilon$. This can be calculated from solving \eqref{error} for $p$,
\begin{equation}
p=\frac{(\Upsilon_q)^{2+1/q}(2\delta)^{1/q} (t^*)^{1+1/q}}{(\epsilon(q+1))^{1/q}}.
\label{formula_depth}
\end{equation}

We make use of the fact that in unstructured search the evolution happens in a symmetric subspace of dimension $n+1$. 
\begin{definition}[Symmetric subspace]
$H_s=Span\{\ket{\psi}: P_{ij}\ket{\psi}=\ket{\psi}\}$, where $P_{ij}$ is a permutation of arbitrary qubits $i$ and $j$.  
\end{definition}

\begin{definition}[Dicke basis vectors]\label{def:dicke_basis}
$\ket{e_k} = \dfrac{1}{\sqrt{C_k^n}}\sum\limits_{z_1+\cdots+z_n=k} \ket{z_1\cdots z_n}$, where $z_i\in\{0,1\}$, and $C_k^n$ are binomial coefficients.
\end{definition}

In this subspace, the operator $H_x$ is a tridiagonal matrix with diagonal elements equal to 0, and off-diagonal elements given by 
\begin{equation}
    \bra{e_{l+1}}H_x\ket{e_l}=\sqrt{(l+1)(n-l)}\leq \frac{n+1}{2}.
\end{equation}
It follows that the max norm  $\|A\|_{\max} = \max_{j,k} |A_{j,k}|$
of $H_2=-iH_x \alpha^*$ can be upper bounded as
\begin{equation}
    \|H_2\|_{\max}\leq \alpha^*\frac{n+1}{2}.
\end{equation}
As for $H_0$, in the symmetric subspace it takes the form of an $n+1$ dimensional square matrix with all elements equal to 0 except $\bra{e_0}H_0\ket{e_0}=1$.

Notice that for an arbitrary matrix $A$ it follows that,
\begin{equation}
    \|[H_1,A]\|_{\max}\leq \|A\|_{\max},\label{max_H1}
\end{equation}
\begin{align}
    \|[H_2,A]\|_{\max} \leq\max_{j,k} &\bigg(\Big(|A_{j+1,k}|+|A_{j-1,k}|+\notag\\
    +|A_{j,k+1}|+&|A_{j,k-1}|\Big)\left(\alpha^*\frac{n+1}{2}\right)\bigg)\notag\\
    \leq & 2\alpha^*(n+1)\|A\|_{\max}. \label{max_H2}
\end{align}
Thus, in order to find an upper bound for the norm of nested commutators in \eqref{delta_nest} we focus only on how many times $H_2$ appears in the sequence. The largest matrix element resulting from any such sequence is bounded by 
\begin{equation}
    \|[H_{\mu_{q+1}},\cdots [H_{\mu_2},H_{\mu_{1}}]\cdots]\|_{\max} \leq  (2\alpha^*\left(n+1\right))^{j}.\label{max_nested}
\end{equation}
where $\mu_1\neq \mu_2$, and $j$ being the number of times $H_2$ appears in the sequence.

We recall the inequality between the spectral and Frobenius  norm $\| \cdot \|_\text{F}$ as,
\begin {equation}
\|A\|_2 = \sigma_{\max}(A)\leq\|A\|_{\rm F} = \left(\sum_{j=1}^m \sum_{k=1}^n |A_{j,k}|^2\right)^{\frac{1}{2}},\label{f_norm}
\end{equation}
where $\sigma_{\max}(A)$ is the largest singular value of $A$.

Using \eqref{max_nested} and \eqref{f_norm} we find the following expression for the spectral norm of nested commutators
    \begin{equation}
       \|[H_{\mu_{q+1}},\cdots [H_{\mu_2},H_{\mu_{1}}]\cdots]\|_2\leq \left(2\alpha^*\left(n+1\right)\right)^{j}(n+1), \label{nested_norm}
    \end{equation}
where $\mu_1\neq \mu_2$, and $j$ is the number of times $H_2$ appears in the sequence.

Substituting \eqref{nested_norm} into the definition of $\delta$ \eqref{delta_nest} we obtain 
\begin{eqnarray}
    \delta\leq 2(n+1)\sum_{j=1}^q  \left(2\alpha^*(n+1)\right)^{j} C^{q-1}_{j-1}
    \label{delta_2}
    \\
    \leq 2(n+1)\left(2\alpha^*(n+1)+1\right)^q).\label{delta_final}
\end{eqnarray}
The binomial coefficients in \eqref{delta_2} come from the number of combinations in which $H_2$ can appear $j-1$ times outside the first commutator. An extra factor of $2$ appears from the first commutator being either $[H_1,H_2]$ or $[H_2,H_1]$.

After substituting $\Upsilon_q=5^{q/2-1}$, $t^*=\frac{\pi}{2}\sqrt{2^n}$, and \eqref{delta_final} into \eqref{formula_depth}, we derive
\begin{equation}
    p \leq p_0 \sqrt{2^n} \left(\frac{2\pi (n+1)\sqrt{2^n}}{5\epsilon}\right)^{\frac{1}{q}}5^q,
    \label{depth_qaoa_qtqw}
\end{equation}
where 
\begin{equation}
    p_0=\frac{\pi \left(2\alpha^*(n+1)+1\right)}{2\cdot 5^{3/2}(q+1)^{\frac{1}{q}}}.
    \label{p_0}
\end{equation}
To minimize \eqref{depth_qaoa_qtqw} we take its derivative and set it to zero. Neglecting a small negative contribution, we find the value of $q$ that results in the shortest depth to be
\begin{equation}
    q\leq\left( \frac{n\cdot \ln\sqrt{2}+\ln(2\pi(n+1))-\ln5\epsilon} {\ln 5}\right)^\frac{1}{2}. \label{order_q}
\end{equation}
Substituting \eqref{order_q} into \eqref{depth_qaoa_qtqw} and simplifying we arrive at
\begin{align}
    p \leq p_0\left(\frac{2\pi(n+1)}{5\epsilon}\right)^{\frac{2}{q}}
    2^{\frac{n}{2}+\sqrt{n}\sqrt{2\log_2 5}}. \label{depth_qaoa_ctqw_subs}
\end{align}
Importantly, from \eqref{p_0} and \eqref{alpha} one can show that $p_0<\frac{1}{2}$. { Moreover, since the CTQW is limited to an overlap $1+O(1/n)$, as given by \eqref{eq:CTQW_max_ovl}, errors smaller than $\epsilon = O(1/n)$ do not meaningfully alter the overlap of QAOA. Therefore, for errors of this order and large $n$, the second factor in \eqref{depth_qaoa_ctqw_subs} does not grow with $n$. This leaves the last factor in \eqref{depth_qaoa_ctqw_subs} as the dominant contribution, establishing \eqref{theorem}}.

\end{proof}

\begin{remark}[Complexity comparison]
    Theorem \ref{complexity} puts the query complexity of  the derived QAOA circuit in between the traditional Grover's algorithm and classical search
\begin{equation}
    O\left(N^{\frac{1}{2}}\right)< O\left(N^{\frac{1}{2}}2^{\sqrt{\log_2 N}\sqrt{2\cdot\log_2 5}}\right)< O(N),
\end{equation}
where $N=2^n$.
\end{remark}
It is also worth noting that 
\begin{equation}
    O\left(N^{\frac{1}{2}}2^{\sqrt{\log_2 N}\sqrt{2\cdot\log_2 5}}\right)< O \left(N^{\frac{1}{2}+c}\right),
\end{equation}
for any $c>0$.

QAOA angles can be recovered by expressing $ S^r_q(t^*/r)$ as a product of second order terms by following \eqref{suzuki_2} and recursively calculating the parameters $t_k$ that correspond to each second order term as,
\begin{equation}
    S^r_q(t^*/r) = \prod_{k=1}^{p} S_2(t_k).
\end{equation}
Then, QAOA angles can be recovered by
\begin{equation}
    \beta_k=t_k,~ \gamma_{k\neq p}=\frac{t_k+t_{k+1}}{2}, ~ \gamma_p=\frac{t_p}{2}.
\end{equation}

\section{Numerical experiments}
We begin testing our analytics by confirming that the resulting QAOA sequence has sufficient depth to precisely approximate the evolution of the corresponding CTQW.
{ We compare (i) Grover search, (ii) QAOA from a trotterized CTQW with numerically calculated depth, and (iii) QAOA from a trotterized CTQW with analytically predicted depth. For these three cases, Figure \ref{fig:troterized} illustrates the overlaps increasing through the sequences for $\epsilon = 0.01$ and $n = 42, 46$.
Here $n=46$ is the largest system size with the optimal order $q=6$, allowing for the numerical estimation of the optimal depth with high precision (details on the numerical execution of these circuits and search for optimal depths can be found in Appendix \ref{appendix:numerical}). 
For the QAOA sequence obtained analytically we use a $q=4$ order formula, as calculated from equation \eqref{order_q}.
We observe the overlaps from the QAOA sequences to follow smooth curves reminiscent of those of CTQWs with respect to $t$, reaching $1 + O\left(\frac{1}{n}\right)$ as in equation \eqref{eq:CTQW_max_ovl}.}   

\begin{figure}
    \centering
    \includegraphics[width=0.5\textwidth]{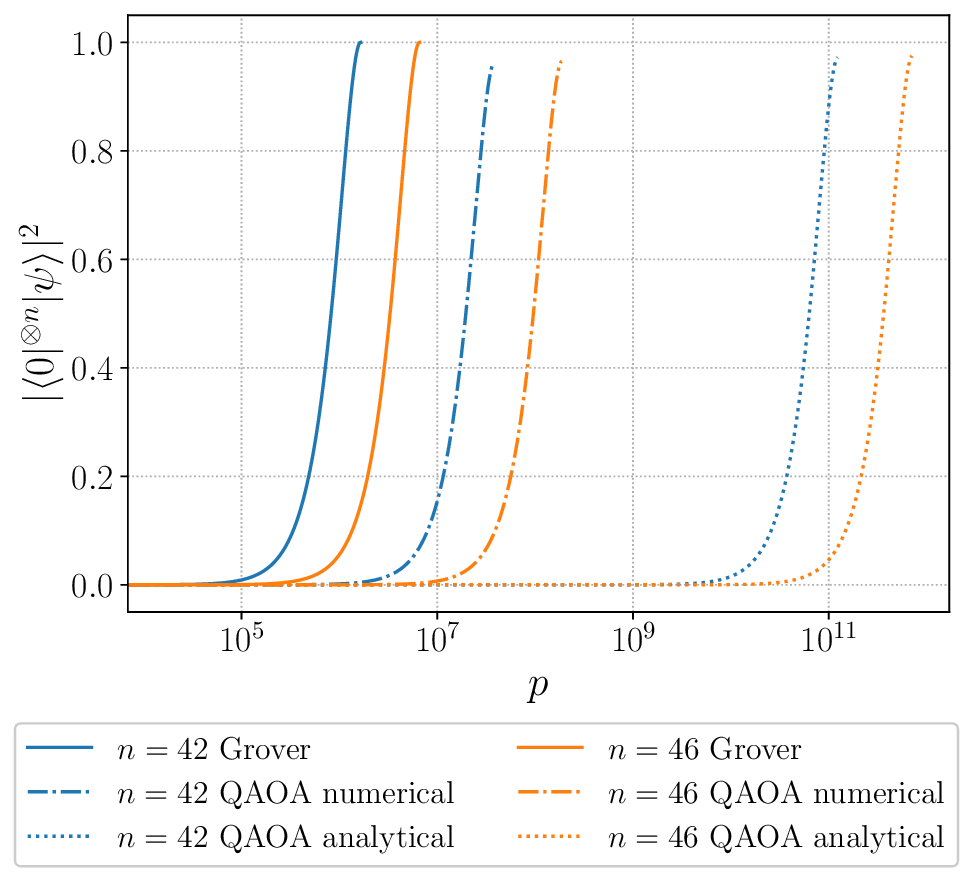}
    \caption{{Overlap with the target state $\ket{0}^{\otimes n}$ through the circuit for states $\ket{\psi}$ prepared by: (i) Grover search, (ii) QAOA from a trotterized CTQW with numerically calculated depth, and (iii) QAOA from a trotterized CTQW with analytically predicted depth for $n=42,46$ and $\epsilon= 0.01$. In agreement with \eqref{eq:CTQW_max_ovl}, the overlaps of the QAOA sequences reach $1 + O\left(\frac{1}{n}\right)$. }}
    \label{fig:troterized}
\end{figure}

In order to verify the tightness of expression \eqref{depth_qaoa_ctqw_subs}, we exhaustively calculate the depth required by a trotterized CTQW to have an error below certain threshold $\epsilon$. Figure \ref{fig:depth_ratio} illustrates the ratio between the depth predicted by equation \eqref{depth_qaoa_ctqw_subs} and the depth calculated numerically in the range { $n \in [22,68]$} and $\epsilon=\{ 0.001, 0.01, 0.1\}$. When numerically approximating the optimal depth, for each pair $n$ and $\epsilon$, we use the formula of the order which results in the shortest sequence. The sharp fluctuations that appear at $q=8$ are likely due to the longer sequences of every step $S_8(t/r)$, and the more computationally intensive task of approximating depths for the larger system sizes. Across the entire range the ratio demonstrates a seven fold increase, showing a much slower growth compared to the dominant factor $2^\frac{n}{2}$.

\begin{figure}
    \centering
    \includegraphics[width=0.5\textwidth]{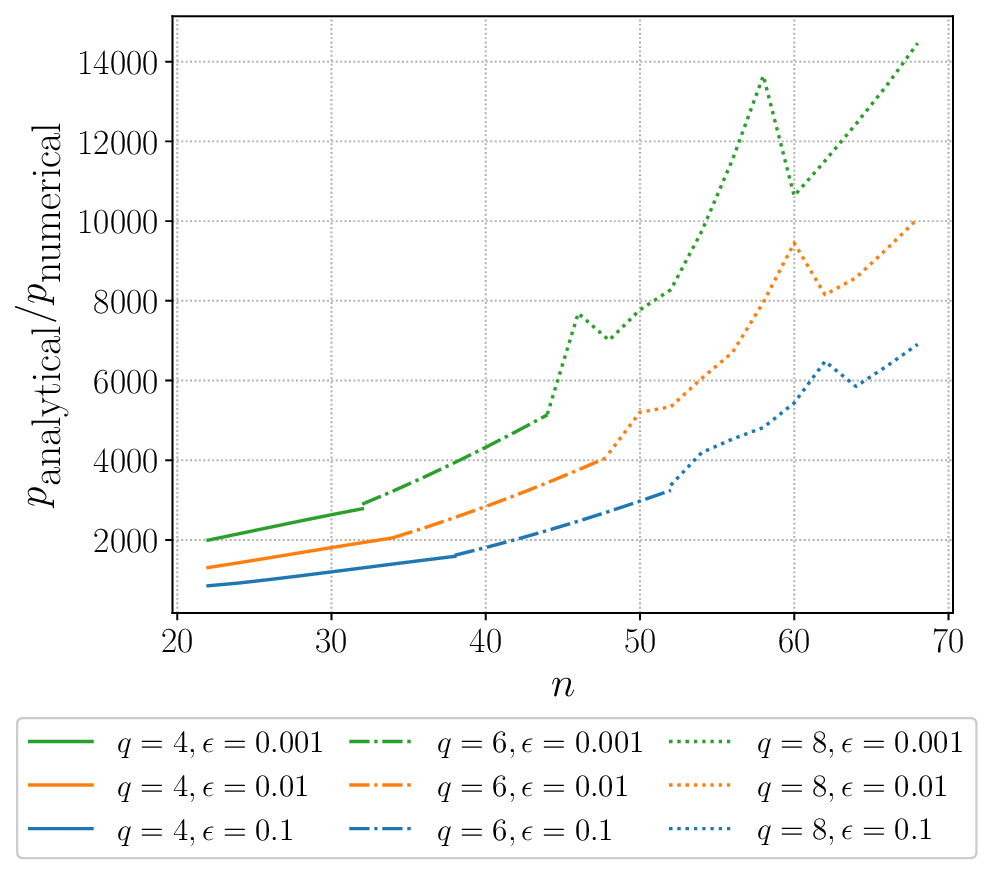}
    \caption{Ratios between the analytically calculated depth $p_\mathrm{analytical}$, given by \eqref{depth_qaoa_ctqw_subs}, and numerically calculated depth $p_\mathrm{numerical}$ for system sizes $n \in [22,68]$ and $\epsilon\in \{ 0.001, 0.01, 0.1\}$. For each pair $n$ and $\epsilon$, $p_\mathrm{numerical}$ is calculated using the formula of order $q$ which results in the shortest sequence.}
    \label{fig:depth_ratio}
\end{figure}

Figure \ref{fig:depth_eps} illustrates numerically calculated depth with respect to $\epsilon\in[0.001, 0.1]$ for $n=40,42,44$ and $q=6$. It can be observed that depth growth slows down as $\epsilon$ becomes smaller. In equation \eqref{depth_qaoa_ctqw_subs}, the impact of $\epsilon$ on depth is given by the factor $\epsilon^{-\frac{2}{q}}$ which diminishes for higher order formulas (large systems), and goes to 1 in the limit $n\rightarrow \infty$.

\begin{figure}
    \centering
    \includegraphics[width=0.5\textwidth]{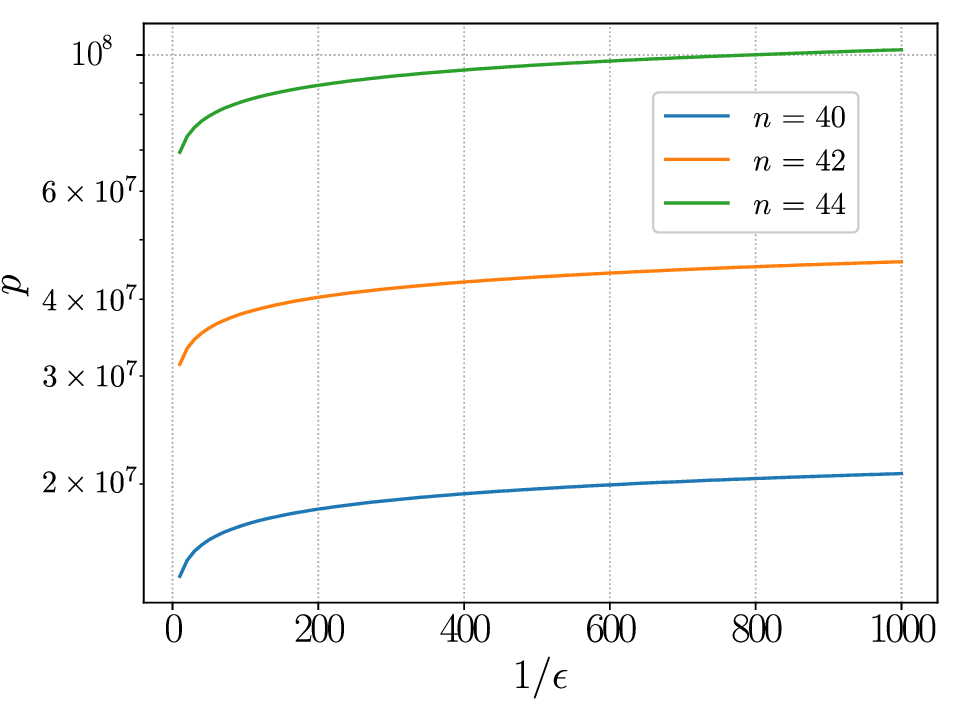}
    \caption{Numerically calculated depths for a range of $\epsilon\in [0.001,0.1]$ for $n=40, 42, 44$ and $q=6$.}
    \label{fig:depth_eps}
\end{figure}

\section{Discussion}
Our results demonstrate quantum advantage for unstructured search via QAOA where the ansatz state approaches perfect overlap with the target. We prove the query complexity to be higher than that of Grover's algorithm $O(N^{\frac{1}{2}})$, but notably smaller than $O\left(N^{\frac{1}{2}+c}\right)$ for any $c>0$.
Compared to the results presented by Jiang et al. \cite{jiang2017near}, our approach offers an improvement to the overlap in exchange for a higher complexity.

It might be tempting to tighten the bound \eqref{depth_qaoa_ctqw_subs} by considering evolution in a two dimensional subspace spanned by the low energy eigenstates \eqref{ctqw_true_eigenstates}. However, the estimate obtained this way does not match the numerical experiments, predicting depths of the order of $O(2^\frac{n}{2})$, lower than what we numerically observe. This discrepancy is likely caused by a probability leakage out of this two dimensional subspace.

Nevertheless, it may still be possible to improve the derived upper bound for depth.
    Equation \eqref{error}, as introduced by Childs \cite{childs2021theory}, makes use of multiple upper bounds in an attempt to make it as general as possible, which opens the door for finding a tighter expression tailored to our particular setting. Similarly, in our analysis we make use of some generous upper bounds in order to obtain a closed form of $\delta$, which offers room for improvement. { Regardless of this, it is unlikely that any improvement would lower the bound to the point of optimal scaling. As we observe numerically, the suboptimal complexity appears to be a fundamental feature of the trotterization of the considered CTQW. Alternatively, one may consider the trotterization of hybrid adiabatic and CTQW evolutions, similar to those presented in \cite{morley2019quantum}. Such evolutions also present optimal scaling and may result in Trotter sequences of lower complexity.}

{ Our result provides insight on the power of QAOA and makes a step towards closing the gap between QAOA and Grover's algorithm in terms of scaling and overlap. If this gap were to be closed, QAOA could become a more attractive alternative to Grover's algorithm due to the easier implementation of the mixer $H_x$ compared to the projector $\ketbra{+}{+}^{\otimes n}$.}

 \section{Acknowledgments}

The work of E.C. and D.R. was supported by Rosatom in the framework of the Roadmap for Quantum computing (Contract No. 868-1.3-15/15-2021 dated October 5, 2021 and Contract No. №R2163 dated December 03, 2021).

\onecolumngrid
\newpage

\appendix

\section{Optimal value of $\alpha$ and energy gap}
\label{ap:alpha_delta}
For completeness here we reiterate the calculation of optimal value of $\alpha$ for CTQW, presented in \cite{farhi2000quantum}.
Let $H= \alpha H_x + \ketbra{\mathbf{0}}{\mathbf{0}}$, where $\ket{\mathbf{0}}\equiv\ket{0}^{\otimes n}$. We seek to find $\alpha$ for which the energy gap $\Delta$ is minimum.

Let $\ket{e_k}$ be the Dicke states as in Definition \ref{def:dicke_basis}, and $\ket{h_k}=H_a ^{\otimes n} \ket{e_k}$ where $H_a$ is the Hadamard gate. It can be seen that
\begin{equation}
    \left(\sum_{i=1}^n Z_i\right) \ket{e_k}= (n-2k)\ket{e_k},
\end{equation}
\begin{equation}
    H_x\ket{h_k}=(n-2k)\ket{h_k}.
    \label{eq:H_x_eigen}
\end{equation}

We solve for the eigenvalues of $H$, $H\ket{\psi}=E\ket{\psi}$. Multiplying it by $\bra{h_k}$, 
\begin{equation}
    \bra{h_k}H\ket{\psi} = E \braket{h_k}{\psi},
\end{equation}
substituting $H$ and using property \eqref{eq:H_x_eigen} after a series of algebraic manipulations we end up with \eqref{eq:hk_psi}:
\begin{equation}
    \alpha(n-2k)\braket{h_k}{\psi}+\braket{h_k}{\mathbf{0}}\braket{\mathbf{0}}{\psi} = E \braket{h_k}{\psi},
\end{equation}
\begin{equation}
    [E-\alpha(n-2k)]  \braket{h_k}{\psi} =\braket{h_k}{\mathbf{0}}\braket{\mathbf{0}}{\psi},
\end{equation}
\begin{equation}
    \braket{h_k}{\psi} =\frac{\braket{h_k}{\mathbf{0}}\braket{\mathbf{0}}{\psi}}{E-\alpha(n-2k)}.\label{eq:hk_psi}
\end{equation}
We multiply both sides by $\braket{\mathbf{0}}{h_k}$ and sum over $k$,
\begin{equation}
    \sum_{k=0}^n\braket{\mathbf{0}}{h_k}\braket{h_k}{\psi} =\sum_{k=0}^n\frac{|\braket{h_k}{\mathbf{0}}|^2\braket{\mathbf{0}}{\psi}}{E-\alpha(n-2k)},
\end{equation}
\begin{equation}
    1 =\sum_{k=0}^n\frac{P_k}{E-\alpha(n-2k)},
\end{equation}
where we used the fact that $\ket{h_k}$ form a basis in the symmetric subspace. Here $P_k= |\braket{h_k}{o}|^2=\frac{C_n^k}{2^n}$.
We introduce a change of variables
\begin{equation}
   \lambda = E/\alpha \implies \alpha = \sum_{k=0}^n\frac{P_k}{\lambda-(n-2k)}. \label{eq_alpha}
\end{equation}
Let $\alpha >0$, the right hand side of \eqref{eq_alpha} approaches $+\infty$ as $\lambda \rightarrow n+0, n-2+0, \cdots -n+0$, and goes to $-\infty$ as $\lambda \rightarrow n-0, n-2-0, \cdots -n-0$.
We will prove that there exist two roots exponentially close to $\lambda=n$ for
\begin{equation}
    \alpha ^*=\frac{1}{2}\sum_{k=1}^n\frac{P_k}{k}. \label{eq:alpha*}
\end{equation} 
Substituting $\alpha^*$ from \eqref{eq:alpha*} to \eqref{eq_alpha}, we get
\begin{equation}
    \frac{1}{2}\sum_{k=1}^n\frac{P_k}{k}=\sum_{k=0}^n\frac{P_k}{\lambda+2k-n}\iff \sum_{k=1}^n\frac{P_k(\lambda-n)}{2k(\lambda+2k-n)}=\frac{P_0}{\lambda - n}=\frac{1}{2^n(\lambda-n)} \label{eq:roots1}.
\end{equation}
As we search for $|\lambda-n|<<1$, we neglect the term in the denominator of the LHS of \eqref{eq:roots1},
\begin{equation}
    \frac{1}{4}\sum_{k=1}^n\frac{P_k}{k^2}=\frac{1}{2^n(\lambda-n)^2}\implies\lambda = n \pm \xi,
\end{equation}
where $\xi=\frac{2}{\sqrt{2^n}}\left(\sum_{k=1}^n\frac{P_k}{k^2}\right)^{-\frac{1}{2}}$. The energy gap is then 
\begin{equation}
    \Delta=2\alpha^*\xi,
    \label{eq:gap_pr}
\end{equation} which is correct up to exponential precision $O(2^{-n})$. However, to obtain a tractable expression we approximate sums in the expression for $\alpha^*$ and $\xi$
\begin{equation}
    \sum_{k=0}^n\frac{P_k}{k}= \frac{2}{n}+O\left(\frac{1}{n^2}\right), ~  \sum_{k=0}^n\frac{P_k}{k^2}= \frac{4}{n^2}+O\left(\frac{1}{n^3}\right),
\end{equation}
allowing us to conclude
\begin{equation}
    \implies \Delta = \frac{2}{\sqrt{2^n}}\left(1+O\left(\frac{1}{n}\right)\right).
    \label{eq:gap}
\end{equation}

\section{Low energy eigenstates of $H$}
\label{ap:eigenstates}
From equations \eqref{eq:hk_psi} and \eqref{eq:alpha*} we have, 
\begin{equation}
\alpha^*\braket{h_k}{\psi}=\frac{\sqrt{P_k}\braket{\mathbf{0}}{\psi}}{\lambda-n+2k},
\end{equation}
which for $k=0$ simplifies to
\begin{equation}
    \alpha^*\braket{h_0}{\psi}=\pm\frac{\sqrt{P_0}}{\xi}\braket{\mathbf{0}}{\psi}.\label{eq:alpha+psi}
\end{equation}
Note that $\ket{h_0}=\ket{+}^{\otimes n}$. For $k\neq 0$ exponentially small $\lambda-n$ can be neglected, giving
\begin{equation}
    \alpha^*\braket{h_k}{\psi}=\frac{\sqrt{P_k}}{2k}\braket{\mathbf{0}}{\psi}.\label{eq:alphahkpsi}
\end{equation}
\begin{equation}
    \sum_{k=0}^n|\braket{h_k}{\psi}|^2=1=\frac{|\braket{\mathbf{0}}{\psi}|^2}{(\alpha^*)^2}\left(\frac{P_0}{\xi^2}+\sum_{k=1}^n\frac{P_k}{4k^2}\right)
\end{equation}
Thus, 
\begin{equation}
   (\alpha^*)^2 =|\braket{\mathbf{0}}{\psi}|^2\left(\frac{P_0}{\xi^2}+\sum_{k=1}^n\frac{P_k}{4k^2}\right) = |\braket{\mathbf{0}}{\psi}|^2\left(\frac{2}{2^n\xi^2}\right)
\end{equation}
Where we used equation \eqref{eq:roots1} at the last transition. Finally, we conclude


\begin{equation}
    \braket{\mathbf{0}}{\psi_\pm}=\pm\frac{\sqrt{2^n}\Delta}{2\sqrt{2}} 
    .
    \label{eq:0psi_pm}
\end{equation}
From equations \eqref{eq:alpha+psi} and \eqref{eq:0psi_pm}
\begin{equation}
     \braket{h_0}{\psi_\pm}=\frac{1}{\sqrt{2}}+O(2^{-n}),\label{eq:+psipm}
\end{equation}
Similarly using \eqref{eq:alphahkpsi}, \eqref{eq:0psi_pm} for $k\neq0$,
\begin{equation}
    \braket{h_k}{\psi_\pm}=\pm\frac{\sqrt{P_k}}{k\sqrt{2}}\left(\sum_{k=0}^n\frac{P_k}{k^2}\right)^{-\frac{1}{2}}+O(2^{-n}), \label{eq:hkpsipm}
\end{equation}
From equations \eqref{eq:+psipm} and \eqref{eq:hkpsipm} we obtain the eigenstates
\begin{equation}
    \ket{\psi_\pm}= \frac{1}{\sqrt{2}}\left(\ket{+}^{\otimes n}\pm\left(\sum_{k=0}^n\frac{P_k}{k^2}\right)^{-\frac{1}{2}}\sum_{k=1}^n \frac{\sqrt{P_k}}{k}\ket{h_k}\right)+O(2^{-n}).
    \label{ctqw_true_eigenstates}
\end{equation}
Note, the second therm has high overlap with $\ket{\mathbf{0}}$ 
\begin{equation}
   \left(\sum_{k=0}^n\frac{P_k}{k^2}\right)^{-\frac{1}{2}}\bra{\mathbf{0}} \sum_{k=1}^n \frac{\sqrt{P_k}}{k}\ket{h_k} = \left(\sum_{k=0}^n\frac{P_k}{k^2}\right)^{-\frac{1}{2}} \sum_{k=1}^n \frac{P_k}{k}
\end{equation}
\begin{equation}
    =\left(\frac{4}{n^2}+O\left(n^{-3}\right)\right)^{-\frac{1}{2}}\left(\frac{2}{n}+O\left(n^{-2}\right)\right)
\end{equation}
\begin{equation}
    =1+O\left(n^{-1}\right),
\end{equation}
which justifies use of expressions \eqref{ctqw_eigenstates}. Notice, however, that while these expressions are only polynomially correct, this imprecision comes from approximating the second part of 
\eqref{ctqw_true_eigenstates} with state $\ket{\bm 0}$. The original form of the eigenstates \eqref{ctqw_true_eigenstates} remains exponentially precise. 

\section{QAOA sequence from the trotterized CTQW}
\label{appendix:CTQW_to_QAOA}
The second order Trotter sequence for the evolution of a Hamiltonian with two terms $H = H_1 + H_2$ is
\begin{align}
   S_2(t) &=  e^{H_2 \frac{t}{2}}e^{H_1 \frac{t}{2}}e^{H_1 \frac{t}{2}}e^{H_2 \frac{t}{2}}\\
   &= e^{H_2 \frac{t}{2}}e^{H_1 t}e^{H_2 \frac{t}{2}}.
\end{align}
Similarly, after grouping neighboring terms, higher order Suzuki sequences take the form 
\begin{equation}
    S_{q}(t)=\left(\prod_{v=1}^{5^{q/2-1}} e^{H_2ta(v)}e^{H_1tb(v)}\right)e^{H_2tc}, \label{trotter_seq_H1H2}
\end{equation}
where $a,b,c \in \mathbb{R}$. In the case of unstructured search, by setting $H_1=-iH_0$, $H_2=-iH_x \alpha^*$, a step $S_q(t/r)$ takes the form 
\begin{equation}
    S_{q}(t/r)\ket{+}^{\otimes n}=e^{-i\frac{\alpha^*ntc}{r}}\prod_{v=1}^{5^{q/2-1}} e^{-iH_x\frac{\alpha^* t a(v)}{r}}e^{-iH_0\frac{tb(v)}{r}}\ket{+}^{\otimes n},
    \label{trotter_applied}
\end{equation}
which has $\Upsilon_q = 5^{q/2-1}$ stages. Using $r$ steps in the sequence \eqref{eq:formula_form}, and grouping terms in neighboring steps one gets

\begin{align}
   S_{q}^r(t/r)\ket{+}^{\otimes n} & = \left\{\left(\prod_{v=1}^{5^{q/2-1}} e^{-iH_x\frac{\alpha^*ta(v)}{r}}e^{-iH_0\frac{tb(v)}{r}}\right)e^{-iH_x{\frac{\alpha^*tc}{r}}}\right\}^r\ket{+}^{\otimes n}\notag\\
   &=e^{-i\alpha^*ntc'}\prod_{v=1}^{r5^{q/2-1}} e^{-iH_x\alpha^*ta'(v)}e^{-iH_0tb'(v)}\ket{+}^{\otimes n},
\end{align}

with $a', b', c' \in \mathbb{R}$, which recovers a QAOA sequence of depth $p = r\Upsilon_q = r5^{q/2-1}$.

{
\section{Numerical details}
\label{appendix:numerical}
The numerics presented in this paper were performed by simulating the QAOA circuits in the $n+1$ dimensional symmetric subspace. Due to the QAOA angles repeating across Trotter steps, we simulate these circuit by calculating powers of a step $S_q(t/r)$, which can be performed efficiently. 

In order to numerically approximate the optimal depth for a given $n$, $\epsilon$ and $q$ we:
\begin{enumerate}
\item Define the number of steps as $r_{dl}=d\cdot 2^{n/2-l}$, where $d,l$ are integers initially set to $d=1$, $l=0$. 
\item Numerically calculate $S_q^{r_{dl}}(t/r_{dl})$ and iteratively increase the value of $d$ one by one until finding $d=d'$ such that:   \begin{equation}
    |\bra{0}^{\otimes n}S_q^{r_{d'l}}(t/r_{d'l})\ket{+}^{\otimes n}|^2 \geq |\bra{0}^{\otimes n}U(\alpha^*,t^*)\ket{+}^{\otimes n}|^2 -\epsilon. \label{eq:acceptance_condition}
\end{equation}
\item Once condition \eqref{eq:acceptance_condition} is fulfilled we perform binary search. This is done by setting the new initial value of $d \rightarrow 2d'-1$ and $l\rightarrow l+1$, and  repeating step 2.
\item Steps 2 and 3 get repeated for a fixed number of iterations. Specifically for the numerics presented in the manuscript, we used 15 iterations.
\item The resulting approximated optimal depth is given by $p_\mathrm{numerical} = r_{dl}\cdot 5^{q/2-1}$.
\end{enumerate}
The code used for this paper is written in Python and is available on reasonable request.
}

\begin{thebibliography}{0}%
\makeatletter
\providecommand \@ifxundefined [1]{%
 \@ifx{#1\undefined}
}%
\providecommand \@ifnum [1]{%
 \ifnum #1\expandafter \@firstoftwo
 \else \expandafter \@secondoftwo
 \fi
}%
\providecommand \@ifx [1]{%
 \ifx #1\expandafter \@firstoftwo
 \else \expandafter \@secondoftwo
 \fi
}%
\providecommand \natexlab [1]{#1}%
\providecommand \enquote  [1]{``#1''}%
\providecommand \bibnamefont  [1]{#1}%
\providecommand \bibfnamefont [1]{#1}%
\providecommand \citenamefont [1]{#1}%
\providecommand \href@noop [0]{\@secondoftwo}%
\providecommand \href [0]{\begingroup \@sanitize@url \@href}%
\providecommand \@href[1]{\@@startlink{#1}\@@href}%
\providecommand \@@href[1]{\endgroup#1\@@endlink}%
\providecommand \@sanitize@url [0]{\catcode `\\12\catcode `\$12\catcode `\&12\catcode `\#12\catcode `\^12\catcode `\_12\catcode `\%12\relax}%
\providecommand \@@startlink[1]{}%
\providecommand \@@endlink[0]{}%
\providecommand \url  [0]{\begingroup\@sanitize@url \@url }%
\providecommand \@url [1]{\endgroup\@href {#1}{\urlprefix }}%
\providecommand \urlprefix  [0]{URL }%
\providecommand \Eprint [0]{\href }%
\providecommand \doibase [0]{http://dx.doi.org/}%
\providecommand \selectlanguage [0]{\@gobble}%
\providecommand \bibinfo  [0]{\@secondoftwo}%
\providecommand \bibfield  [0]{\@secondoftwo}%
\providecommand \translation [1]{[#1]}%
\providecommand \BibitemOpen [0]{}%
\providecommand \bibitemStop [0]{}%
\providecommand \bibitemNoStop [0]{.\EOS\space}%
\providecommand \EOS [0]{\spacefactor3000\relax}%
\providecommand \BibitemShut  [1]{\csname bibitem#1\endcsname}%
\let\auto@bib@innerbib\@empty
\end{thebibliography}%


\begin{thebibliography}{10}

\bibitem{sharma2020noise}
Kunal Sharma, Sumeet Khatri, Marco Cerezo, and Patrick~J Coles.
\newblock Noise resilience of variational quantum compiling.
\newblock {\em New Journal of Physics}, 22(4):043006, 2020.

\bibitem{gentini2020noise}
Laura Gentini, Alessandro Cuccoli, Stefano Pirandola, Paola Verrucchi, and Leonardo Banchi.
\newblock Noise-resilient variational hybrid quantum-classical optimization.
\newblock {\em Physical Review A}, 102(5):052414, 2020.

\bibitem{cincio2021machine}
Lukasz Cincio, Kenneth Rudinger, Mohan Sarovar, and Patrick~J Coles.
\newblock Machine learning of noise-resilient quantum circuits.
\newblock {\em PRX Quantum}, 2(1):010324, 2021.

\bibitem{farhi2014quantum}
Edward Farhi, Jeffrey Goldstone, and Sam Gutmann.
\newblock A quantum approximate optimization algorithm.
\newblock {\em arXiv preprint arXiv:1411.4028}, 2014.

\bibitem{harrigan2021quantum}
Matthew~P Harrigan, Kevin~J Sung, Matthew Neeley, Kevin~J Satzinger, Frank Arute, Kunal Arya, Juan Atalaya, Joseph~C Bardin, Rami Barends, Sergio Boixo, et~al.
\newblock Quantum approximate optimization of non-planar graph problems on a planar superconducting processor.
\newblock {\em Nature Physics}, 17(3):332--336, 2021.

\bibitem{Zhou2020}
Leo Zhou, Sheng-Tao Wang, Soonwon Choi, Hannes Pichler, and Mikhail~D. Lukin.
\newblock Quantum approximate optimization algorithm: Performance, mechanism, and implementation on near-term devices.
\newblock {\em Phys. Rev. X}, 10:021067, Jun 2020.

\bibitem{wang2020x}
Zhihui Wang, Nicholas~C Rubin, Jason~M Dominy, and Eleanor~G Rieffel.
\newblock X y mixers: Analytical and numerical results for the quantum alternating operator ansatz.
\newblock {\em Physical Review A}, 101(1):012320, 2020.

\bibitem{Brady2021}
Lucas~T. Brady, Christopher~L. Baldwin, Aniruddha Bapat, Yaroslav Kharkov, and Alexey~V. Gorshkov.
\newblock {Optimal Protocols in Quantum Annealing and Quantum Approximate Optimization Algorithm Problems}.
\newblock {\em Physical Review Letters}, 126(7):070505, Feb 2021.

\bibitem{akshay2021parameter}
V.~Akshay, D.~Rabinovich, E.~Campos, and J.~Biamonte.
\newblock Parameter concentration in quantum approximate optimization.
\newblock {\em arXiv preprint arXiv:2103.11976}, 2021.

\bibitem{lloyd2018quantum}
Seth Lloyd.
\newblock Quantum approximate optimization is computationally universal.
\newblock {\em arXiv preprint arXiv:1812.11075}, 2018.

\bibitem{morales2020universality}
Mauro~ES Morales, JD~Biamonte, and Zolt{\'a}n Zimbor{\'a}s.
\newblock On the universality of the quantum approximate optimization algorithm.
\newblock {\em Quantum Information Processing}, 19(9):1--26, 2020.

\bibitem{morales2018variational}
Mauro~ES Morales, Timur Tlyachev, and Jacob Biamonte.
\newblock Variational learning of grover's quantum search algorithm.
\newblock {\em Physical Review A}, 98(6):062333, 2018.

\bibitem{jiang2017near}
Zhang Jiang, Eleanor~G Rieffel, and Zhihui Wang.
\newblock Near-optimal quantum circuit for grover's unstructured search using a transverse field.
\newblock {\em Physical Review A}, 95(6):062317, 2017.

\bibitem{grover1996fast}
Lov~K Grover.
\newblock A fast quantum mechanical algorithm for database search.
\newblock In {\em Proceedings of the twenty-eighth annual ACM symposium on Theory of computing}, pages 212--219, 1996.

\bibitem{farhi2000quantum}
Edward Farhi, Jeffrey Goldstone, Sam Gutmann, and Michael Sipser.
\newblock Quantum computation by adiabatic evolution.
\newblock {\em arXiv preprint quant-ph/0001106}, 2000.

\bibitem{childs2004spatial}
Andrew~M Childs and Jeffrey Goldstone.
\newblock Spatial search by quantum walk.
\newblock {\em Physical Review A}, 70(2):022314, 2004.

\bibitem{farhi1998quantum}
Edward Farhi and Sam Gutmann.
\newblock Quantum computation and decision trees.
\newblock {\em Physical Review A}, 58(2):915, 1998.

\bibitem{childs2003exponential}
Andrew~M Childs, Richard Cleve, Enrico Deotto, Edward Farhi, Sam Gutmann, and Daniel~A Spielman.
\newblock Exponential algorithmic speedup by a quantum walk.
\newblock In {\em Proceedings of the thirty-fifth annual ACM symposium on Theory of computing}, pages 59--68, 2003.

\bibitem{childs2009universal}
Andrew~M Childs.
\newblock Universal computation by quantum walk.
\newblock {\em Physical review letters}, 102(18):180501, 2009.

\bibitem{berry2007efficient}
Dominic~W Berry, Graeme Ahokas, Richard Cleve, and Barry~C Sanders.
\newblock Efficient quantum algorithms for simulating sparse hamiltonians.
\newblock {\em Communications in Mathematical Physics}, 270:359--371, 2007.

\bibitem{suzuki1976generalized}
Masuo Suzuki.
\newblock Generalized trotter's formula and systematic approximants of exponential operators and inner derivations with applications to many-body problems.
\newblock {\em Communications in Mathematical Physics}, 51(2):183--190, 1976.

\bibitem{childs2021theory}
Andrew~M Childs, Yuan Su, Minh~C Tran, Nathan Wiebe, and Shuchen Zhu.
\newblock Theory of trotter error with commutator scaling.
\newblock {\em Physical Review X}, 11(1):011020, 2021.

\bibitem{campos2021training}
E.~Campos, D.~Rabinovich, V.~Akshay, and J.~Biamonte.
\newblock Training saturation in layerwise quantum approximate optimization.
\newblock {\em Physical Review A}, 104(3):L030401, 2021.

\bibitem{childs2002quantum}
Andrew~M Childs, Enrico Deotto, Edward Farhi, Jeffrey Goldstone, Sam Gutmann, and Andrew~J Landahl.
\newblock Quantum search by measurement.
\newblock {\em Physical Review A}, 66(3):032314, 2002.

\bibitem{morley2019quantum}
James~G Morley, Nicholas Chancellor, Sougato Bose, and Viv Kendon.
\newblock Quantum search with hybrid adiabatic--quantum-walk algorithms and realistic noise.
\newblock {\em Physical review A}, 99(2):022339, 2019.

\end{thebibliography}
\end{document}